\newtheorem{thm}{Theorem}[section]
\newtheorem{proposition}[thm]{Proposition}
\newtheorem{remark}[thm]{Remark}
\begin{document}

\begin{titlepage}

\begin{center}

{\Large \bf Reductions and degenerate limits of Yang-Baxter maps with $3\times 3$ Lax matrices}

\vskip 1.5cm

{{\bf P. Adamopoulou$^{\star}$, \bf T.E. Kouloukas$^{\dagger}$ and G. Papamikos$^{\ast}$ }} 

\vskip 0.8cm

{\footnotesize
$^{\star}$ Maxwell Institute for Mathematical Sciences and Department of Mathematics, \\ Heriot-Watt University}
\\
{\footnotesize
$^{\dagger}$ School of Computing and Digital Media, London Metropolitan University}
\\
{\footnotesize
$^{\ast}$ School of Mathematics, Statistics and Actuarial Science, University of Essex}

\vskip 0.5cm

{\footnotesize {\tt E-mail: p.adamopoulou@hw.ac.uk, t.kouloukas@londonmet.ac.uk  g.papamikos@essex.ac.uk }}\\
\end{center}

\vskip 2.0cm

\begin{abstract}
\noindent We generalise a family of quadrirational parametric Yang-Baxter maps with $3\times 3$ Lax matrices by introducing additional essential parameters. These maps preserve a prescribed Poisson structure which originates from the Sklyanin bracket. We investigate various low-dimensional reductions of this family, as well as degenerate limits with respect to the parameters that were introduced. As a result, we derive several birational Yang-Baxter maps, and we discuss some of their integrability properties. This work is part of a more general classification of Yang-Baxter maps admitting a strong $3\times 3$ Lax matrix with a linear dependence on the spectral parameter.
\end{abstract}

\hspace{.2cm} \textbf{Mathematics Subject Classification:} 16T25, 37J10, 14E05

\hspace{.2cm} \textbf{Keywords:} Yang-Baxter equation, birational maps, Lax matrices,

\hspace{.2cm} discrete dynamical systems, symplectic maps, Liouville integrability.

\vfill
\end{titlepage}

\section{Introduction}

\subsection{Yang-Baxter maps and Lax matrices}

In \cite{Drinfeld1992}, the study of set theoretical solutions to the Yang-Baxter (YB) equation \cite{Baxter, Yang}  was proposed, with such solutions now known as YB maps. A map $\mathbf{R}: \mathcal{X} \times
\mathcal{X} \to \mathcal{X} \times \mathcal{X}$, with $\mathcal{X}$ any set, is called a YB map  \cite{Bukhshtaber1998,Veselov2003} if it satisfies the set theoretical YB equation
\begin{equation}\label{YB eq}
 \mathbf{R}^{12}\circ \mathbf{R}^{13}\circ
\mathbf{R}^{23}= \mathbf{R}^{23}\circ \mathbf{R}^{13}\circ \mathbf{R}^{12} \,.
\end{equation}
$\mathbf{R}^{ij}$, for $i\neq j \in \{1,2,3\}$, denotes the action of the map $\mathbf{R}$ on the $i$ and $j$ factors of $\mathcal{X} \times \mathcal{X} \times \mathcal{X}$ and identically on the remaining factor, i.e. $\mathbf{R}^{12}=\mathbf{R} \times id_{\mathcal{X}}$ where $id_{\mathcal{X}}$ is the identity map over $\mathcal{X}$. While in general the set $\mathcal{X}$ can be any set, here we assume that it is an algebraic variety over a field $\mathbb{F}$ of characteristic zero. A parametric YB map \cite{Veselov2003, Veselov2007} is a YB map which depends on parameters $a, b \in \mathcal{P} \subset \mathbb{F}^d $, acting as $\mathbf{R}: (\mathcal{X} \times \mathcal{P}) \times (\mathcal{X} \times \mathcal{P}) \to (\mathcal{X} \times \mathcal{P}) \times (\mathcal{X} \times \mathcal{P})$, with
\begin{equation}
    \mathbf{R}((x,a),(y,b)) = \big((u(x,a,
y,b), a),(v(x,a,y,b),b) \big):=(u,v)\,.
\end{equation}
We will refer to the parameters $a,b \in \mathcal{P}$ as YB parameters and 
 denote such a YB map simply by $\mathbf{R}_{a,b}:(x,y) \mapsto (u,v)$, i.e. as a map from $\mathcal{X} \times \mathcal{X}$ to itself. The YB maps that we consider are all birational maps. We call a birational map $(x,y)\to(u,v)$ quadrirational \cite{ABS2004,PTSV2010} or non-degenerate if both maps $u(\cdot, y)$ and $v(x, \cdot)$ are birational isomorphisms of $\mathcal{X}$ to itself.

Of particular relevance are YB maps which arise from refactorisation problems of Lax matrices. A matrix $\mathbf{L}$ depending on $x \in \mathcal{X}$, $a \in \mathcal{P}$ and another parameter $\lambda \in \mathbb{F}$, is called a Lax matrix \cite{SurisVeselov2003, Veselov2003} for a parametric YB map $\mathbf{R}_{a,b}:(x,y) \mapsto (u,v)$, if 
\begin{equation} \label{refLx}
    \mathbf{L}(u,a, \lambda) \mathbf{L}(v,b, \lambda) = \mathbf{L}(y,b,\lambda) \mathbf{L}(x,a,\lambda)\;.
\end{equation}
If the refactorisation problem \eqref{refLx} is equivalent to  $(u,v)=\textbf{R}_{a,b}(x,y)$, then $\mathbf{L}$ is called  strong Lax matrix. The maps $\textbf{R}_{a,b}$ obtained in this way are all birational but not necessarily quadrirational. In addition, if the equation
\begin{equation}
\mathbf{L}(\hat{x},a,\lambda)\mathbf{L}(\hat{y},b,\lambda)\mathbf{L}(\hat{z},c,\lambda)=\mathbf{L}(x,a,\lambda)\mathbf{L}(y,b,\lambda)\mathbf{L}(z,c,\lambda)
\end{equation}
implies the unique solution $\hat{x}=x$, $\hat{y}=y$, $\hat{z}=z$, then it follows that the map $\textbf{R}_{a,b}$ is a YB map \cite{KP2009}.

Various YB maps and their generalisations in higher dimensions or over associative algebras (but not necessarily commutative) have been constructed recently, see \cite{AdamKRPap2021, BobenkoSuris2002, Doliwa2014, KasKoul2022, KNPT,KRM2013, MikPapWang2016} and references therein. In this work, we focus on refactorisation problems of $3 \times 3$ Lax matrices. Starting from an $18$-dimensional YB map, which we call principal parametric YB map, we derive lower-dimensional multi-parametric YB maps via several types of reductions. The obtained maps admit invariant quantities (first-integrals) that Poisson commute with respect to an $r$-matrix Poisson structure (Sklyanin bracket). After a degenerate limit is considered, the resulting maps lose their quadrirationality and become birational. These reduced birational maps can be thought of as vectorial and multi-parametric generalisations of the Adler-Yamilov (AY) map which is related to the nonlinear Schr\"odinger equation \cite{AdlerYamilov1994}.

\subsection{Poisson Yang-Baxter maps with binomial Lax matrices}

To construct multidimenional Yang-Baxter maps it is very natural to study
the solutions of the refactorisation problem, 
\begin{equation} \label{refbig}
\mathbf{L}(U,a,\lambda)\mathbf{L}(V,b,\lambda)=\mathbf{L}(Y,b,\lambda)\mathbf{L}(X,a,\lambda)
\end{equation}
with respect to $U$ and $V$ and with Lax matrices $\mathbf{L}(X,a,\lambda)$ which are first degree polynomials in the spectral parameter $\lambda$, 
\begin{equation}
\mathbf{L}(X,a,\lambda)=X-\lambda K_a.
\label{eq: Lax}
\end{equation}
Here, $X,Y,U,V$ are generic elements in $\mathfrak{gl}_n(\mathbb{F})$, while $K : \mathbb{F}^d \to \mathrm{GL}_n(\mathbb{F})$ is a $d$-parametric family of commuting
matrices and $K_a,K_b$ denote the values $K(a),K(b)$ respectively. In \cite{KP2009,KP2011}, solutions of this refactorisation problem were presented which satisfy the Yang-Baxter equation. We can express these solutions recursively as
\begin{equation}\label{solref}
    \begin{split}
        U &= \left( -f_0(X; a) I - \sum_{i=1}^{n} (-1)^i f_i(X; a) M_{i-1} \right) \left( \sum_{i=1}^{n} (-1)^i f_i(X; a) M_{i-1} \right)^{-1} K_{a}\;, \\
        V &= K_{a}^{-1} (Y K_{a} + K_{b} X - UK_{b})\;,
    \end{split}
\end{equation}
where $M_0 = I$, $N_0 = 0$, $M_1 = (Y K_{a} + K_{b} X) K_{b}^{-1}$, $N_1 = -Y K_{b}^{-1} K_{a}$ and 
\[
M_i = M_1 M_{i-1} + N_1 N_{i-1}, \quad N_i = N_1 M_{i-1}, \quad \text{for} \quad i = 2, \dots, n.
\]
Here, the functions $f_i$, for $i=0,\dots n$, are defined by the coefficients of the polynomial $p^a_{\lambda}(X)=\det(X-\lambda K_a)$, by the expression:
\[
p^a_{\lambda}(X) = (-1)^n f_n(X, a) \lambda^n + (-1)^{n-1} f_{n-1}(X, a) \lambda^{n-1} + \dots + (-1) f_1(X, a) \lambda + f_0(X, a)\;,
\]
with $f_n(X, a) = \det K_a$ and $f_0(X,a) = \det X$.

The solution \eqref{solref} satisfies the additional conditions 
\begin{equation} \label{condi}
f_i(U, a) = f_i(X, a), \  f_i(V, b) = f_i(Y, b), \ \ i = 0, \dots, n\;,
\end{equation}
or equivalently the condition $\det(U K_b-Y K_a)\neq 0$ (or $\det(K_a V-K_b Y)\neq 0$). 
The corresponding map $\mathcal{R}_{a,b}:(X,Y)\mapsto(U,V)$ defined by \eqref{solref}, is a quadrirational Yang-Baxter map. Furthermore, $\mathcal{R}_{a,b}$ is a Poisson map with respect to the Sklyanin bracket \cite{Sklyanin1982}
\begin{equation} \label{Skly}
    \lbrace \mathbf{L}(X,a,\lambda_1) \overset{\otimes}{,} \mathbf{L}(X,a,\lambda_2)  \rbrace = \left[   \frac{P}{\lambda_1-\lambda_2}, \mathbf{L}(X,a,\lambda_1) \otimes \mathbf{L}(X,a,\lambda_2) \right], 
\end{equation}
and $\lbrace \mathbf{L}(X,a,\lambda_1) \overset{\otimes}{,} \mathbf{L}(Y,b,\lambda_2)\rbrace=0$, where $P(x\otimes y)=y\otimes x$. 
The functions $f_i$, along with all elements of $K_a$ and $K_b$, are Casimirs for this Poisson bracket. Hence, the invariant conditions \eqref{condi} allow us to further reduce the $2n^2$-dimensional map $\mathcal{R}_{a,b}$ to a $2n(n-1)$-dimensional symplectic Yang-Baxter map on the level sets
\begin{equation}
\mathcal{C}=\lbrace (X,Y):f_i(X)=\alpha_i, f_i(Y)=\beta_i, \ i=0,\dots n-1 \rbrace \subset \mathfrak{gl}_n(\mathbb{F})\times\mathfrak{gl}_n(\mathbb{F})\;,
\label{casC}
\end{equation}
where $\alpha_i$ and $\beta_i$ represent additional YB parameters.

In this paper, we study the case where $K_a$ and $K_b$ are $3 \times3$ diagonal matrices. Our analysis covers all cases of binomial Lax matrices with diagonalisable higher-degree term,  as equation \eqref{refbig} remains invariant under conjugation with a constant matrix. We will investigate lower dimensional reductions and specific limits leading to non-quadrirational Yang-Baxter maps.

\section{Yang-Baxter maps with $3\times 3$ Lax matrices} \label{sec:3x3}

\subsection{The principal parametric Yang-Baxter map}
We consider the refactorisation problem \eqref{refbig} for generic $3\times 3$ matrices $X,Y,U,V$ and nonzero diagonal $3$-parametric matrices $K_a,K_b$. Using the scaling symmetry $\mathbf{L}\to r \mathbf{L}$, with $r \in \mathbb{F}\backslash \lbrace 0 \rbrace$, of equation \eqref{refbig} one can rescale any of the parameters of $K_a$ and $K_b$ to $1$, without loss of generality. Nevertheless, in this section we keep all parameters in $K_a, K_b$ arbitrary, as elements of a projective space, and we will use the rescaling when we consider certain reductions in later sections. 

We start with the general $3 \times 3$ Lax matrix of the form $\mathbf{L}(X,a,\lambda) = X- \lambda K_{a}$ with $\lambda \in \mathbb{F}$ and 
\begin{equation}\label{X, K mtrx}
    X= \begin{pmatrix}
        x_{11} & x_{12} & x_{13}\\
        x_{21} & x_{22} & x_{23} \\
        x_{31} & x_{32} & x_{33}
    \end{pmatrix},
     \quad
     K_a = \begin{pmatrix}
         a_1 &0&0\\
         0& a_2 &0\\
         0& 0 &a_3
     \end{pmatrix},
\end{equation}
where $X\in \mathfrak{gl}_3(\mathbb{F})$ and $a=(a_1,a_2,a_3)$ is an element of the $\mathbb{F}$-projective plane $\mathbb{P}^2(\mathbb{F})$. The Sklyanin bracket \eqref{Skly} implies the following linear Poisson bracket between the variables $x_{ij}$
\begin{equation}\label{PBs}
    \lbrace  x_{ij}, x_{kl} \rbrace = a_i x_{kj} \delta_{li} - a_j x_{il} \delta_{kj} \,,
\end{equation}
while $\lbrace x_{ij}, a_k \rbrace =0$.
This Poisson bracket admits six linearly independent Casimir functions $ a_1,a_2,a_3$, $f_0,f_1,f_2$, where $f_i$ are defined by the coefficients of the polynomial $p^a_{\lambda}(X)=\det(X-\lambda K_a)$, i.e., 
\begin{equation} \label{Cas}
    \begin{split}
        f_0(X,a) &= \det X, \\
        f_1(X,a) &= a_3(x_{11}x_{22} - x_{12}x_{21}) + a_2(x_{11}x_{33} - x_{13}x_{31}) + a_1( x_{22}x_{33} - x_{23}x_{32}), \\
        f_2(X,a) &= a_2a_3x_{11} + a_1a_3x_{22} + a_1a_2x_{33}, \\
        f_3(X,a) &=\det K_a\,,
    \end{split}
\end{equation}

In this case, the refactorisation problem \eqref{refbig}, implies uniquely an $18$-dimensional Poisson Yang-Baxter map  
$$\mathbf{R}_{a,b}:(X,Y)\mapsto(U,V)$$ 
defined by \eqref{solref}, which can be reduced to a $12$-dimensional symplectic YB map 
$\mathrm{R}_{\bar{a},\bar{b}}$ on $\mathcal{C}$, the intersection $\cap_{i=0}^2f_i^{-1}(\alpha_i)\times f_i^{-1}(\beta_i)\subset \mathfrak{gl}_3(\mathbb{F})\times\mathfrak{gl}_3(\mathbb{F})$, where $\alpha_i$ and $\beta_i$ are additional Yang-Baxter parameters taking values in $\mathbb{F}$. Here we denote by $\bar{a}=((a_1,a_2,a_3),(\alpha_0,\alpha_1,\alpha_2))\in \mathbb{P}^2(\mathbb{F})\times \mathbb{F}^3$ and similarly for $\bar{b}$. This parametric family of YB maps $\mathrm{R}_{\bar{a},\bar{b}}$, containing ten effective parameters, constitutes a generalisation of the family derived in Proposition 4.4 of \cite{KP2011} where the case $K_a=K_b=I$ was considered. In the following sections, we show that further lower dimensional reductions of this map are possible.

\subsection{Reduction to $8$-dimensional Yang-Baxter map}

The $9$-dimensional Poisson manifold $\mathcal{L}:=\{\mathbf{L}(X,a,\lambda): a \ \text{constant}\}$, equipped with the Sklyanin bracket \eqref{PBs}, has rank six. However, the rank of this Poisson structure can be reduced to four by imposing constraints on $x_{ij}$ that result in the vanishing of  matrix minors. We denote by $M_{ijk, lmn}$ the minor formed by deleting rows $i,j,k$ and columns $l,m,n$ from the matrix in \eqref{Skly}. We consider the minors 
\begin{equation}\label{minors}
    \begin{split}
        M_{789,125} &= -a_1a_2a_3^2 \left( a_2(x_{13}^2x_{21} -x_{13}x_{11}x_{23}) + a_1(x_{13}x_{22}x_{23} - x_{12}x_{23}^2)  \right)^2, \\
        M_{589,127} &= -a_1^2a_2a_3 \left( a_3(x_{12}^2x_{23} - x_{12}x_{13}x_{22}) + a_2(x_{13}x_{12}x_{33} - x_{13}^2x_{32}   \right)^2,\\ 
        M_{478,478} & = -a_1^2a_2^2a_3^2 \left( x_{12}x_{23}x_{31} - x_{13}x_{21}x_{32} \right)^2.
    \end{split}
\end{equation}
We notice that the system of equations $M_{789,125}=0,~ M_{589,127}=0,~M_{478,478}=0$ is linear in $x_{11},~x_{31},~x_{32}$, and for 
$x_{13}, x_{23} \neq 0$ implies the solution  
\begin{equation} \label{x11x31x32}
    \begin{split}
        x_{11} &= \frac{x_{13} x_{21}}{x_{23}} + \frac{a_1(x_{13}x_{22} - x_{12}x_{23})}{a_2x_{13}} \,,\\
        x_{31} &= \frac{x_{21}x_{33}}{x_{23}} - \frac{a_3}{a_2} \frac{x_{21}x_{22}}{x_{23}} + \frac{a_3}{a_2} \frac{x_{21} x_{12}}{x_{13}} \,, \\
        x_{32} &= \frac{x_{12}x_{33}}{x_{13}} - \frac{a_3}{a_2} \frac{x_{12}x_{22}}{x_{13}} + \frac{a_3}{a_2} \frac{x_{12}^2 x_{23}}{x_{13}^2}\,.
    \end{split}
\end{equation}
Substituting relations \eqref{x11x31x32} in the Casimir functions \eqref{Cas} we obtain the following set of reduced rational Casimirs $\iota^*f_i=f_i\circ \imath :\mathbb{F}^6\to\mathbb{F}$ 
\begin{equation} \label{casM}
    \begin{split}
        \iota^*f_0 &= \frac{(x_{13} x_{22}-x_{12} x_{23})^2 \left(a_2 x_{13} (a_1 x_{23} x_{33}+ a_3 x_{13} x_{21})+ a_1 a_3 x_{12} x_{23}^2\right)}{a_2^2 x_{13}^3 x_{23}} \,, \\
        \iota^*f_1 &= \frac{(x_{13} x_{22}-x_{12} x_{23}) (2 a_2 x_{13} (a_1 x_{23} x_{33}+a_3 x_{13} x_{21})+a_1 a_3 x_{23} (x_{12} x_{23}+x_{13}x_{22}))}{a_2 x_{13}^2 x_{23}} \,, \\
        \iota^*f_2 &= a_1 a_2 x_{33} +2 a_1 a_3 x_{22}-\frac{a_1 a_3 x_{12} x_{23}}{x_{13}}+\frac{a_2a_3 x_{13} x_{21}}{x_{23}} \,.
    \end{split}
\end{equation}
Here we define $\imath$ to be the inclusion map $\mathbb{F}^6\hookrightarrow \mathfrak{gl}_3(\mathbb{F})$ 
$$\iota:(x_{12},x_{13},x_{21},x_{22},x_{23},x_{33}) \mapsto X=(x_{ij}), $$ with $x_{11},x_{31},x_{32}$ given by \eqref{x11x31x32}.

In what follows, for simplicity we will denote all $\iota^*f_i$ by $\widetilde{f}_i$. We also denote $\mathcal{M}=\text{Img}(\iota)$, i.e. the set of reduced matrices in $\mathcal{L}$, with $x_{11}, x_{31},x_{32}$ defined by \eqref{x11x31x32}. The following proposition holds:
\begin{proposition}\label{prop: inclus}
$\mathcal{M}$ is a Poisson submanifold of $\mathcal{L}$ of rank four. Furthermore, the discriminant of the cubic polynomial in $\lambda$ of $\iota^*p^a_{\lambda}$ vanishes, i.e.
 \begin{equation}
4\widetilde{f}_0\widetilde{f}_2^3-\widetilde{f}_1^2 \widetilde{f}_2^2+4 \widetilde{f}_3\widetilde{f}_1^3-18\widetilde{f}_0\widetilde{f}_1\widetilde{f}_2\widetilde{f}_3+27 \widetilde{f}_3^2\widetilde{f}_0^2=0.
\label{discr}
\end{equation}
\end{proposition}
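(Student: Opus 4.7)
The plan is to establish the two claims by (i) verifying that the three polynomial factors cutting out $\mathcal{M}$ generate a Poisson ideal for the Sklyanin bracket, (ii) producing an explicit double-root factorisation of $\iota^* p^a_\lambda$ from which \eqref{discr} is immediate, and then (iii) counting independent Casimirs on $\mathcal{M}$ to pin down the rank.

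For the discriminant identity, I would introduce the abbreviations $s := x_{13}x_{22} - x_{12}x_{23}$ and $P := a_2 x_{13}(a_1 x_{23} x_{33} + a_3 x_{13} x_{21}) + a_1 a_3 x_{12} x_{23}^2$. Substituting \eqref{x11x31x32} directly into $X - \lambda K_a$ and evaluating at $\lambda = s/(a_2 x_{13})$, one checks that the first and third rows of $X - \lambda K_a$ both become scalar multiples of the middle row $(x_{21},\, x_{12} x_{23}/x_{13},\, x_{23})$, so $X - \lambda K_a$ has rank one there. This forces the factorisation
\begin{equation*}
\iota^* p^a_\lambda(X) = -a_1 a_2 a_3 \left(\lambda - \frac{s}{a_2 x_{13}}\right)^2 \left(\lambda - \frac{P}{a_1 a_2 a_3 x_{13} x_{23}}\right),
\end{equation*}
which can be verified algebraically by matching the coefficients of $\lambda^0, \lambda^1, \lambda^2$ against \eqref{casM} together with $\widetilde{f}_3 = a_1 a_2 a_3$. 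Since the cubic in $\lambda$ has a double root, its discriminant vanishes, which is precisely \eqref{discr}.

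For $\mathcal{M}$ being a Poisson submanifold, let $\phi_1, \phi_2, \phi_3$ denote the polynomial factors in \eqref{minors} whose squares yield the three minors, so that on $\{x_{13} x_{23} \neq 0\}$ the locus $\mathcal{M}$ coincides with $V(\phi_1, \phi_2, \phi_3)$. Using \eqref{PBs} and the Leibniz rule, I would compute $\{\phi_i, x_{kl}\}$ for each $i \in \{1,2,3\}$ and each coordinate $x_{kl}$, and exhibit each bracket as an explicit polynomial combination of $\phi_1, \phi_2, \phi_3$, thereby showing that this ideal is Poisson. Then $\mathcal{M}$ inherits a Poisson structure from $\mathcal{L}$ and $\widetilde{f}_0, \widetilde{f}_1, \widetilde{f}_2$ are Casimirs of the restricted bracket. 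The relation \eqref{discr} implies that only two of the $\widetilde{f}_i$ are functionally independent on the six-dimensional $\mathcal{M}$, so the induced Poisson tensor has corank at least two and rank at most four. Rank exactly four then follows from evaluating a nonzero $4 \times 4$ Pfaffian of the induced bracket at a convenient sample point of $\mathcal{M}$.

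The main technical burden is the Poisson-ideal verification: three constraints against the nine coordinates $x_{ij}$ yields twenty-seven brackets that must be reduced, via \eqref{PBs} and systematic rewriting, to explicit polynomial combinations of the $\phi_j$. The factorisation argument, by contrast, becomes immediate once the abbreviations $s$ and $P$ are introduced, and the rank count reduces to a single nonvanishing Pfaffian computation at a well-chosen generic point.
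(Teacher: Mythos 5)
Your proposal is correct, and its most interesting part takes a genuinely different route from the paper. For the identity \eqref{discr} the paper simply notes that the Jacobian of $\widetilde{f}_0,\widetilde{f}_1,\widetilde{f}_2$ has rank two and that the relation ``can be obtained using elimination algorithms''; you instead exhibit the structural reason for it: with $s=x_{13}x_{22}-x_{12}x_{23}$, the constraints \eqref{x11x31x32} make all three rows of $X-\lambda K_a$ proportional to $(x_{21},\,x_{12}x_{23}/x_{13},\,x_{23})$ at $\lambda=s/(a_2x_{13})$, so the pencil drops to rank one there, $\lambda=s/(a_2x_{13})$ is a double root of $\iota^*p^a_\lambda$, and the discriminant vanishes identically. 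I checked the proportionality of the rows and the coefficient matching of your factorisation against \eqref{casM} with $\widetilde{f}_3=a_1a_2a_3$: both are correct, and this argument is more illuminating (and parameter-uniform) than a Gr\"obner-basis elimination, at the cost of a short verification. For the Poisson-submanifold claim your Poisson-ideal computation on the factors $\phi_1,\phi_2,\phi_3$ of \eqref{minors} is essentially the same direct computation the paper performs, just dualised: the paper checks that the inclusion $\iota$ is a Poisson map for \eqref{PBs}, you check that the defining ideal is closed under the bracket; the two are equivalent in content and comparable in labour. For the rank, the paper substitutes \eqref{x11x31x32} into the structure matrix directly, while you bound the rank above by Casimir counting and below by a sample-point Pfaffian; both work.

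Two small points to tighten. First, on $\{x_{13}x_{23}\neq 0\}$ the locus $V(\phi_1,\phi_2,\phi_3)$ is slightly larger than $\mathcal{M}$: when $x_{12}=0$ the equation $\phi_3=0$ becomes vacuous and $x_{31}$ is unconstrained, so you should either add $x_{12}\neq 0$ (uniqueness of the solution of the linear system in $x_{11},x_{31},x_{32}$ needs it) or identify $\mathcal{M}$ with the relevant component/closure; the paper glosses over the same genericity issue. Second, the relation \eqref{discr} only shows that \emph{at most} two of $\widetilde{f}_0,\widetilde{f}_1,\widetilde{f}_2$ are independent; for your corank-at-least-two bound you also need that two of them genuinely have independent differentials on $\mathcal{M}$ (a rank-two Jacobian check at a generic point, exactly the computation the paper cites), which fits naturally alongside your sample-point Pfaffian evaluation.
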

\begin{proof}
By direct computation we can show that for $x_{13}, x_{23} \neq 0$, the inclusion map $\iota$ defined by \eqref{x11x31x32} is Poisson with respect to the Poisson bracket \eqref{PBs} (and the induced bracket on $\mathcal{M}$).  Hence, $\mathcal{M}$ is a Poisson submanifold of $\mathcal{L}$ and by substituting \eqref{x11x31x32} in \eqref{Skly} reduces the rank of the Poisson structure matrix to four. The coefficients of the pullback $\iota^*p^a_{\lambda}$ are the reduced Casimirs $\iota^*f_i:=\widetilde{f}_i$ given in \eqref{casM}. The rank of the Jacobian matrix of $\widetilde{f}_0,\widetilde{f}_1,\widetilde{f}_2$ is equal to two, and therefore there is one functional relation between them. Relation \eqref{discr} can be obtained using elimination algorithms.
\end{proof}

As we mentioned in the proof of Proposition \eqref{prop: inclus}, on the submanifold $\mathcal{M}$ there are two functionally independent Casimirs. Solving the system $\widetilde{f}_1 = \alpha_1, \widetilde{f}_2 = \alpha_2$ for $x_{22}$ and $x_{33}$ we obtain the following expressions 
\begin{equation}\label{x22x33}
    \begin{split}
        x_{22} &= \frac{c_1}{a_3} + \frac{c_2}{a_3} + \frac{x_{12} x_{23}}{x_{13}}\,, \\
        x_{33} &= \frac{c_1}{a_2} - \frac{2 c_2}{a_2} -\frac{a_3 x_{13} x_{21}}{a_1 x_{23}} - \frac{a_3 x_{12}x_{23}}{a_2 x_{13}} \,,
    \end{split}
\end{equation}
where $c_1$ and $c_2$ depend on the level sets of the Casimirs and the matrix $K_a$ as follows:
\begin{equation} \label{c1c2}
    c_1 = \frac{\alpha_2}{3a_1}\,, \quad c_2 = \pm \frac{\sqrt{\alpha_2^2-3\alpha_1 a_1a_2a_3}}{3a_1}\,.
\end{equation}
Using relations \eqref{x22x33} in \eqref{x11x31x32} we obtain the reduced expressions for $x_{11},~x_{31},~x_{32}$
\begin{equation}\label{x11x31x32 red}
    \begin{split}
        x_{11} &= \frac{a_1 (c_1 + c_2)}{a_2 a_3} + \frac{x_{13} x_{21}}{x_{23}}  \,,\\
        x_{31} &= - \frac{3 c_2 x_{21}}{a_2 x_{23}} -\frac{a_3 x_{13} x_{21}^2}{a_1 x_{23}^2} - \frac{a_3 x_{12} x_{21}}{a_2 x_{13}}  \,,\\
        x_{32} &= - \frac{3 c_2 x_{12}}{a_2 x_{13}} -\frac{a_3 x_{12} x_{21}}{a_1 x_{23}}-\frac{a_3 x_{12}^2 x_{23}}{a_2 x_{13}^2}\,.
    \end{split}
\end{equation}
After the reduction, the Poisson brackets \eqref{PBs} between the remaining variables $x_{12},x_{13},x_{21}, x_{23}$ take the form
\begin{equation}\label{PB red1}
   \lbrace  x_{13}, x_{21} \rbrace = a_1 x_{23}  \,, \quad \lbrace x_{12}, x_{21} \rbrace = a_1 \frac{x_{12}x_{23}}{x_{13}} - a_2 \frac{x_{13} x_{21}}{x_{23}}  \,, \quad \lbrace x_{12}, x_{23} \rbrace = -a_2 x_{13} \,,
\end{equation}
with all other brackets vanishing.  Similarly, after the reduction \eqref{x22x33}-\eqref{x11x31x32 red} the Lax matrix \eqref{eq: Lax} takes the form $\tilde{X} -\lambda K_a$ where $\tilde{X}$ is given by
 \begin{equation*}
    \begin{pmatrix}
       \frac{a_1 (c_1 + c_2)}{a_2 a_3}+\frac{x_{13} x_{21}}{x_{23}} & x_{12} & x_{13} \\
 x_{21} & \frac{c_1 + c_2}{a_3} + \frac{x_{12} x_{23}}{x_{13}} & x_{23} \\
- \frac{3 c_2 x_{21}}{a_2 x_{23}} -\frac{a_3 x_{13} x_{21}^2}{a_1 x_{23}^2} - \frac{a_3 x_{12} x_{21}}{a_2 x_{13}} & -  \frac{3 c_2 x_{12}}{a_2 x_{13}}  -\frac{a_3 x_{12} x_{21}}{a_1 x_{23}} - \frac{a_3 x_{12}^2 x_{23}}{a_2 x_{13}^2}   & \frac{c_1 -  2c_2}{a_2} -\frac{a_3 x_{13}x_{21}}{a_1 x_{23}} - \frac{a_3 x_{12} x_{23}}{a_2 x_{13}}  \\ 
     \end{pmatrix}.
\end{equation*}

The change of variables
\begin{equation}\label{can vars}
    x_{12} = -a_2 x_2X_1 \,, \quad x_{13} = X_1 \,, \quad x_{21} = -a_1x_1X_2 \,, \quad  x_{23} = X_2\,,
\end{equation}
brings the brackets \eqref{PB red1} to the canonical form, i.e.
\begin{equation}\label{PBred can}
    \lbrace x_1, X_1 \rbrace = 1\,, \quad \lbrace x_2, X_2 \rbrace = 1\,, \quad \lbrace x_1, x_2 \rbrace = 0\,, \quad \lbrace X_1, X_2 \rbrace = 0 \,,
\end{equation}
while the reduced Lax matrix $\tilde{X} -\lambda K_a$ takes the form
\begin{flalign} \label{Lax red can}
  L({\bm x}, {\bm X}, p, \lambda)  = &&
  \end{flalign}
 {\small
 \begin{equation*}
\begin{pmatrix}
 \frac{a_1 (c_1 + c_2)}{a_2 a_3} - a_1x_1X_1 - \lambda a_1  & -a_2 x_{2}X_1 & X_1 \\
 -a_1x_1X_2 & \frac{c_1 + c_2}{a_3} -a_2x_2X_2 - \lambda a_2 & X_2 \\
 \frac{3 a_1 c_2 x_1}{a_2} - a_1a_3x_1(x_1X_1+x_2X_2) &  3c_2x_2 -a_2a_3x_2(x_1X_1+x_2X_2)   & \frac{c_1 - 2c_2}{a_2} +a_3(x_1X_1+x_2X_2) - \lambda a_3  \\
    \end{pmatrix},
 \end{equation*}   
}
where ${\bm x}= (x_1, x_2)$, ${\bm X}=(X_1, X_2)$ and $p = (c_1,c_2, a_1,a_2,a_3)$.

The matrix refactorisation problem
\begin{equation}
  L({\bm u}, {\bm U},p,\lambda) L({\bm v}, {\bm V}, q,\lambda) = L({\bm y}, {\bm Y}, q,\lambda) L({\bm x}, {\bm X},p,\lambda),
\end{equation}
with $q=(d_1,d_2,b_1,b_2,b_3)$ has a unique solution for ${\bm u}, {\bm U}, {\bm v}, {\bm V}$ in terms of ${\bm x}, {\bm X}, {\bm y}, {\bm Y}$ given by
\begin{equation}\label{map u}
\begin{split}
    \left(u_1, u_2 \right) &= \frac{b_3}{a_3} \left(y_1, y_2 \right) + \frac{b_3}{a_3} \frac{C_1}{D_1} \left( \frac{a_3}{b_1}x_1-y_1, \frac{a_3}{b_2} x_2 -y_2 \right),\\
\left( v_1, v_2 \right) &= \left(\frac{a_1}{b_1} x_1, \frac{a_2}{b_2} x_2   \right) - b_1 \frac{C_2}{D_2} \left( a_1 ( \frac{a_3}{b_1}x_1 -y_1 ), a_2 ( \frac{a_3}{b_2}x_2 - y_2 )  \right),
    \end{split}
\end{equation}
where
\begin{equation}
    \begin{split}
      C_1 &=  a_2a_3b_1 (d_1-2d_2) - b_1b_2b_3(c_1 - 2c_2),\\  
      C_2 &= \frac{b_2b_3}{a_3}(c_1+c_2) - a_2(d_1+d_2), \\
      D_1 & = a_2a_3b_2b_3 (a_3x_1-b_1y_1)Y_1 + a_2a_3b_1b_3 (a_3x_2-b_2y_2)Y_2 + C_1 + 3a_2a_3b_1d_2, \\
      D_2 & = a_2b_1b_2b_3(a_3x_1-b_1y_1)X_1 + a_2b_1b_2b_3(a_3x_2-b_2y_2)X_2 -C_1 -3a_2a_3b_1d_2,
    \end{split}
    \label{CDs}
\end{equation}
and 
\begin{equation} \label{map U}
    \begin{split}
        U_1 &= \frac{(a_1x_1 - b_1v_1)X_1 + (a_1y_1-a_3v_1)Y_1}{a_1u_1 - b_3v_1}, \; U_2 = \frac{(a_2x_2-b_2v_2)X_2 + (a_2y_2 - a_3v_2)Y_2}{a_2u_2 - b_3v_2}, \\
        V_1 &= \frac{(b_1u_1 - b_3x_1)X_1 + (a_3u_1 - b_3y_1)Y_1}{a_1u_1 - b_3v_1}, \; V_2 = \frac{(b_2u_2 - b_3x_2)X_2 + (a_3u_2 - b_3y_2)Y_2}{a_2u_2 - b_3v_2}\,.
    \end{split}
\end{equation}

Finally, by direct computation we can prove the following proposition:
\begin{proposition} 
The map 
\begin{equation} \label{YB2nd}
R_{p,q}:((x_1,x_2,X_1,X_2),(y_1,y_2,Y_1,Y_2))\mapsto ((u_1,u_2,U_1,U_2),(v_1,v_2,V_1,V_2)) \;,
\end{equation}
with $u_i,U_i,v_i,V_i$ defined by \eqref{map u}-\eqref{map U} is a  parametric quadrirational Yang-Baxter map with strong Lax  matrix \eqref{Lax red can}. Furthermore, $R_{p,q}$ is symplectic with respect to 
$$\omega= dx_1 \wedge dX_1 + dx_2 \wedge dX_2+ dy_1 \wedge dY_1+ dy_2 \wedge dY_2\;.$$
\end{proposition}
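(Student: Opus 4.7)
The plan is to verify the four claims of the proposition---strong Lax matrix, Yang-Baxter equation, quadrirationality, and symplecticity---by exploiting the fact that $R_{p,q}$ is constructed as the reduction of the principal Poisson YB map $\mathbf{R}_{a,b}$ to the symplectic leaf defined by the minors relations \eqref{minors} together with the Casimir level conditions $\widetilde{f}_1=\alpha_1$, $\widetilde{f}_2=\alpha_2$.

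For the strong Lax matrix property, the approach is to solve the refactorisation $L(\bm{u},\bm{U},p,\lambda)L(\bm{v},\bm{V},q,\lambda) = L(\bm{y},\bm{Y},q,\lambda)L(\bm{x},\bm{X},p,\lambda)$ explicitly: comparing coefficients in $\lambda$ yields a polynomial system whose unique solution in the open dense locus where $D_1$, $D_2$, $a_1 u_1 - b_3 v_1$ and $a_2 u_2 - b_3 v_2$ are nonzero is precisely \eqref{map u}-\eqref{map U}, and this uniqueness is exactly the strong Lax property. Quadrirationality then follows by inspection: for fixed $(\bm{y},\bm{Y})$, the map $(\bm{x},\bm{X})\mapsto(\bm{u},\bm{U})$ is rational with a rational inverse obtained by solving the refactorisation in the opposite direction, and analogously for $(\bm{x},\bm{X})$ fixed.

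For the Yang-Baxter equation, the cleanest route is to invoke the criterion from~\cite{KP2009}: it suffices to show that the triple refactorisation $L(\hat{\bm{x}},\hat{\bm{X}},p,\lambda)L(\hat{\bm{y}},\hat{\bm{Y}},q,\lambda)L(\hat{\bm{z}},\hat{\bm{Z}},r,\lambda) = L(\bm{x},\bm{X},p,\lambda)L(\bm{y},\bm{Y},q,\lambda)L(\bm{z},\bm{Z},r,\lambda)$ admits only the trivial solution $\hat{\bm{x}}=\bm{x}$, $\hat{\bm{X}}=\bm{X}$, etc., on a dense open subset of the parameter and phase spaces. Alternatively, since $R_{p,q}$ is the restriction of $\mathbf{R}_{a,b}$ (a YB map by the general construction recalled in Section 1.2) to a subset preserved by the map and by the diagonal action on triples, the YB equation \eqref{YB eq} for $\mathbf{R}_{a,b}$ descends to $R_{p,q}$.

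For the symplecticity with respect to $\omega$, the strategy is Poisson reduction. The principal map $\mathbf{R}_{a,b}$ preserves the Sklyanin bracket \eqref{PBs}; by Proposition~\ref{prop: inclus}, $\mathcal{M}$ inherits a rank-four Poisson structure, and restricting further to the level sets of the Casimirs $\widetilde{f}_1,\widetilde{f}_2$ yields the four-dimensional bracket \eqref{PB red1} on the variables $x_{12},x_{13},x_{21},x_{23}$, which under the change of variables \eqref{can vars} takes the canonical Darboux form \eqref{PBred can}. Since $R_{p,q}$ is the induced map on this reduced leaf and preserves the canonical Poisson structure on each factor, it preserves $\omega$. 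The main obstacle will be checking consistency of the branch choice $c_2 = \pm\sqrt{\alpha_2^2-3\alpha_1 a_1 a_2 a_3}/(3a_1)$ along the map; this is ensured by the invariance of the Casimirs $\widetilde{f}_i$ under refactorisation, which guarantees that the same branch is preserved by $R_{p,q}$.
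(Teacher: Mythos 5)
Your overall plan is close in spirit to the paper's, which simply establishes the proposition ``by direct computation'': one solves the reduced refactorisation problem for the Lax matrix \eqref{Lax red can} explicitly (this is how \eqref{map u}--\eqref{map U} were obtained, and uniqueness of that solution is the strong Lax property), checks quadrirationality by solving in the other directions, obtains the YB property from the uniqueness of the trifactorisation as in \cite{KP2009}, and verifies preservation of $\omega$ directly. Your treatment of the strong Lax property, quadrirationality and the primary YB argument via the \cite{KP2009} criterion is therefore fine, modulo the fact that each of these is ultimately a (heavy) computation that you describe rather than carry out.

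The genuine gap is in the structural shortcut you use for symplecticity (and in your alternative YB argument): you assert that $R_{p,q}$ ``is the restriction of $\mathbf{R}_{a,b}$ to a subset preserved by the map'', but this is exactly what needs proof. The map $R_{p,q}$ was \emph{defined} by solving the refactorisation problem for the reduced Lax matrices \eqref{Lax red can}; it is not a priori the restriction of the $18$-dimensional map, because a Poisson map of the product structure need not preserve the rank-four locus cut out by \eqref{minors} factorwise, so invariance of $\mathcal{M}\times\mathcal{M}$ (intersected with the Casimir level sets) under $\mathbf{R}_{a,b}$ is not automatic. The identification can be repaired: the reduced solution $(u_i,U_i,v_i,V_i)$ produces matrices $U,V$ that solve the unreduced refactorisation \eqref{refbig} and, since the spectral invariants of \eqref{Lax red can} are fixed functions of the parameters $p,q$, they automatically satisfy the invariance conditions \eqref{condi}; uniqueness of the solution of \eqref{refbig} subject to \eqref{condi} then forces the reduced solution to coincide with $\mathbf{R}_{a,b}$ on the embedded data, after which your Poisson-reduction argument (Poisson embedding $\iota$ from Proposition \ref{prop: inclus}, restriction to Casimir levels, Darboux coordinates \eqref{can vars} bringing \eqref{PB red1} to canonical form) does yield symplecticity. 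Without that step, the symplecticity claim is unsupported as written, since you offer no fallback direct check of $\omega$-preservation. Finally, the worry about ``consistency of the branch choice'' for $c_2$ in \eqref{c1c2} is moot: $c_1,c_2$ are fixed constants entering the parameters $p,q$ of the parametric YB map, not dynamical quantities propagated by $R_{p,q}$, so no branch tracking along orbits is required.
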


The map \eqref{YB2nd} admits four functionally independent invariants which can be obtained from the characteristic polynomial of the monodromy matrix $L({\bm y},{\bm Y},q,\lambda)L({\bm x},{\bm X},p,\lambda)$. The two simplest invariants obtained in this way are  
\begin{equation} 
\label{I1I2}
   I_1 = x_1X_1  + y_1Y_1 \,, \quad I_2 = x_2X_2 + y_2Y_2.
\end{equation}

\begin{remark}
For $a_i=b_i=1$ and particular choices of the parameters $c_i$, $d_i$, the YB map \eqref{YB2nd} reduces to  non-degenerate Boussinesq and Goncharenko-Veselov maps \cite{GonVes,KP2011}.
\end{remark}

\subsection{Reduction to a $4$-dimensional Yang-Baxter map}\label{sec:reduction}

In this section, we consider a further folding reduction of the Yang-Baxter map \eqref{YB2nd} on an invariant manifold. We impose the relations
\begin{equation} \label{red 2}
\begin{split}
    &x_1 = x_2 :=x \,, \quad X_1 = X_2:= X \,, \quad a_1=a_2:=a\;, \\
     & y_1 = y_2 :=y \,, \quad Y_1 = Y_2:= Y \,, \quad b_1=b_2:=b\;,
     \end{split}
\end{equation}
which are consistent with the map since relations \eqref{map u}-\eqref{map U} imply $$u_1=u_2, \ U_1=U_2, \ v_1=v_2, \ V_1=V_2.$$
Thus, for $p=(c_1,c_2,a,a_3)$ and $q=(d_1,d_2,b,b_3)$,  the map \eqref{YB2nd} is reduced to a $4$-dimensional quadrirational YB map $$\mathbb{R}_{p,q}:((x,X),(y,Y))\mapsto((u,U),(v,V))$$ on the invariant manifold
$$\mathcal{N}=\{((x,x,X,X),(y,y,Y,Y)):x,X,y,Y \in \mathbb{F}\}\,,$$
where 
\begin{equation}\label{YB red2}
\begin{split}
u = \frac{b_3}{a_3}y+\frac{b_3}{a_3} \frac{\widetilde{C}_1}{\widetilde{D}_1}\left(\frac{a_3}{b}x-y \right), & \quad v = \frac{a}{b}x-a b\frac{\widetilde{C}_2}{\widetilde{D}_2}\left(\frac{a_3}{b}x-y \right),
  \\ 
  U=\frac{(a x-b v)X+(ay-a_3 v)Y}{a u-b_3 v}, & \quad V=\frac{(b u-b_3 x)X+(a_3u-b_3 y)Y}{a u-b_3 v},
\end{split}    
\end{equation}
and $\widetilde{C}_i,~\widetilde{D}_i$ are obtained from \eqref{CDs} under the reduction \eqref{red 2}.

The reduced symplectic structure on $\mathcal{N}$ is 
$$\bbomega=dx\wedge dX + dy \wedge dY\;,$$
and $\mathbb{R}_{p,q}$ is symplectic with respect to $\bbomega$. Furthermore, map $\mathbb{R}_{p,q}$ admits the reduced Lax matrix
\begin{equation} \label{4d Lax}
\mathbb{L}(x, X, p, \lambda)  =
\begin{pmatrix}
 \frac{c_1 + c_2}{a_3} - a x X - \lambda a  & -a x X  & X \\
 -a x X & \frac{c_1 + c_2}{a_3} -a x X - \lambda a & X \\
 3c_2 x - 2aa_3 x^2 X &  3c_2x -2aa_3 x^2 X   & \frac{c_1 - 2c_2}{a} + 2 a_3 x X - \lambda a_3 
    \end{pmatrix},
 \end{equation}  
which can be obtained from \eqref{Lax red can} by imposing the reduction \eqref{red 2}.

Using the trace of the monodromy matrix associated to Lax matrix \eqref{4d Lax} we obtain two functionally independent invariants of the map \eqref{YB red2}. One of the invariants is $\mathbb{I}_1=xX+yY$, which is $I_1$ (or $I_2$) under the reduction \eqref{red 2}, while the other has the form
\begin{equation}
\mathbb{I}_2=a_{11}^{00}~xX+a_{00}^{11}~yY+a_{10}^{01}~xY+a_{01}^{10}~Xy+XY(a_{21}^{01}x^2+a_{01}^{21}y^2+a_{11}^{11}xy),
\end{equation}
where the coefficients $a_{ij}^{kl}$ of the monomials $x^iX^jy^kY^l$ depend on the parameters of the map \eqref{YB red2}. For simplicity, we have omitted the exact dependence of the coefficients $a_{ij}^{kl}$ on the parameters of the map. Moreover, the invariants $\mathbb{I}_1$ and $\mathbb{I}_2$ Poisson commute with respect to the canonical Poisson structure $\lbrace x,X \rbrace=1$ and $\lbrace y,Y \rbrace=1$ and therefore the map \eqref{YB red2} is a 4-dimensional symplectic quadrirational map that is also integrable in the Liouville sense.

\section{Degenerate Limits}

\subsection{A single degenerate limit}

We are interested in studying zero limits for certain parameters involved in the maps derived in Section \ref{sec:3x3}, which effectively result in degenerations for the maps introduced in \cite{KP2011}. We call such limits degenerate. In particular, in this section we focus on the degenerate limit $a_3 \to 0$.

We consider the limit $a_3 \to 0$ of expressions \eqref{x22x33}. For both values of $c_2$ given in \eqref{c1c2}, the level set $\alpha_2$ of the Casimir $f_2$ can be chosen so that the branch of the square root is such that the limit $a_3 \to 0$ of $x_{22}$ results in a unique well-defined expression. The same limit of $x_{33}$ is regular. The obtained expressions are the following 
\begin{equation} \label{x22x33 a3}
        \lim_{a_3 \rightarrow 0} x_{22} =  \frac{a_2 \alpha_1}{2\alpha_2}+\frac{x_{12} x_{23}}{x_{13}} \,, \qquad
        \lim_{a_3 \rightarrow 0} x_{33} = \frac{\alpha_2}{a_1 a_2} \,.
\end{equation}
Substituting formulas \eqref{x22x33 a3} in the expressions  \eqref{x11x31x32} for $x_{11}, x_{31}, x_{32}$ and taking the now regular limit $a_3 \rightarrow 0$ we obtain
\begin{equation}
        x_{11} = \frac{a_1\alpha_1}{2\alpha_2}+\frac{x_{13} x_{21}}{x_{23}} \,, \quad 
        x_{31} = \frac{\alpha_2 x_{21}}{a_1 a_2 x_{23}} \,,\quad
        x_{32} = \frac{\alpha_2 x_{12}}{a_1 a_2 x_{13}} \,.
\end{equation}
Hence, in the limit $a_3 \rightarrow 0$ the Lax matrix $\tilde{X}-\lambda K_a$ takes the form
\begin{equation} \label{Lax degen1}
\lim_{a_3 \to 0} \left( \tilde{X}-\lambda K_a \right)=  \begin{pmatrix}
 \frac{a_1 \alpha_1}{2 \alpha_2} + \frac{x_{13} x_{21}}{x_{23}} -\lambda a_1  & x_{12} & x_{13} \\
 x_{21} & \frac{a_2 \alpha_1}{2 \alpha_2} +\frac{x_{12} x_{23}}{x_{13}} - \lambda a_2  & x_{23} \\
 \frac{\alpha_2 x_{21}}{a_1 a_2 x_{23}} & \frac{\alpha_2 x_{12}}{a_1 a_2 x_{13}} & \frac{\alpha_2}{a_1 a_2} \\
    \end{pmatrix}.
\end{equation}

The change of variables \eqref{can vars} is not affected by the limit $a_3 \rightarrow 0$, and in the variables $x_1, x_2, X_1, X_2$ the Lax matrix \eqref{Lax degen1} takes the form
\begin{equation} \label{Lax degen1 can}
\widetilde{L}({\bm x}, {\bm X}, p,\lambda) =   \begin{pmatrix}
 \frac{a_1 \alpha_1}{2\alpha_2} - a_1x_1X_1 - \lambda a_1  & -a_2 x_2 X_1  & X_1 \\
 -a_1x_1X_2 & \frac{a_2 \alpha_1}{2 \alpha_2} - a_2x_2X_2 - \lambda a_2 & X_2 \\
 -\frac{\alpha_2}{a_2} x_1 & -\frac{\alpha_2}{a_1} x_2 & \frac{\alpha_2}{a_1a_2}  \\
    \end{pmatrix},
\end{equation}
with $p = (\alpha_1, \alpha_2,a_1,a_2)$. The refactorisation problem associated to the Lax matrix \eqref{Lax degen1 can} results in the following birational (but non quadrirational) YB map $\widetilde{R}_{p,q}: (x_1,x_2,X_1,X_2,y_1,y_2,Y_1,Y_2)\to(u_1,u_2,U_1,U_2,v_1,v_2,V_1,V_2)$ where
\begin{equation}
\label{eq: degen lim YB 1}
    \begin{split}
        u_i &= \frac{a_1 a_2 \beta_2}{b_1 b_2 \alpha_2}~ y_i\,, \\
        U_i &= \frac{b_1 b_2 \alpha_2}{a_1 a_2 \beta_2}~ Y_i -\frac{b_1b_2b_i k}{a_1a_2 \beta_2(b_1y_1X_1 + b_2y_2X_2) - \alpha_2 \beta_2} ~ X_i \,, \\
        v_i &= \frac{a_i}{b_i} ~ x_i + \frac{a_1 a_2 a_i k}{a_1a_2 \alpha_2(b_1y_1X_1 + b_2y_2X_2) - \alpha_2^2} ~ y_i \\
        V_i &= \frac{b_i}{a_i}~ X_i \,,
    \end{split}
\end{equation}
for $i=1,2$ and with $k=\frac{\alpha_2 \beta_1 -\alpha_1 \beta_2}{2 \beta_2}$. This 8-dimensional YB map admits four functionally independent polynomial invariants. Two of these are given in \eqref{I1I2}, which remain unaffected by the limit $a_3 \to 0$. Hence, we have $\widetilde{I}_1 = x_1X_1  + y_1Y_1$, $\widetilde{I}_2 = x_2X_2 + y_2Y_2$. Additionally, from the spectrum of the monodromy matrix, we obtain the following two invariants:
\begin{equation}
\begin{split}
    \widetilde{I}_3 &= -a_1^2 a_2 b_1^2 b_2 ( \alpha_2 \beta_1 x_1X_1 + \alpha_1 \beta_2 y_1Y_1 ) -a_1 a_2^2b_1b_2^2 (\alpha_2 \beta_1 x_2X_2 + \alpha_1 \beta_2 y_2Y_2 ) \\
   & + 2\alpha_2\beta_2 \Bigl( b_1b_2 (a_1x_1Y_1 +a_2x_2Y_2) -\beta_2  \Bigr) \Bigl( a_1a_2 ( b_1y_1X_1 +b_2y_2X_2 ) - \alpha_2 \Bigr),
    \end{split}
\end{equation}
and
\begin{equation}
    \begin{split}
        \widetilde{I}_4 & = a_1a_2b_1b_2 \Bigl( \alpha_2\beta_1 (x_1X_1 + x_2X_2) + \alpha_1\beta_2 (y_1Y_1+y_2Y_2)  \Bigr)  + 2\alpha_2^2\beta_2 ( b_2x_1Y_1 +b_1x_2Y_2 ) \\
        &  + 2\alpha_2\beta_2^2 ( a_2y_1X_1 + a_1y_2X_2 ) -2a_1a_2b_1b_2\alpha_2\beta_2 (x_1X_1 +x_2X_2) (y_1Y_1 + y_2Y_2).
    \end{split}
\end{equation}

The folding reduction \eqref{red 2} can also be applied to the map \eqref{eq: degen lim YB 1}. Indeed, in this case the map will simplify to the four dimensional YB map $\widetilde{\mathbb{R}}_{a,b}: (x,X,y,Y) \to (u,U,v,V)$ given by
\begin{equation}
\label{eq: reduced degen lim YB 1}
    \begin{split}
        u = \frac{a^2 \beta_2}{b^2 \alpha_2} y, & \quad U = \frac{b^2 \alpha_2}{a^2 \beta_2}Y -\frac{b^3 k}{2a^2b \beta_2 yX  - \alpha_2 \beta_2} X , \\
       V = \frac{b}{a} X \,,  & \quad  v = \frac{a}{b}  x + \frac{a^3 k}{2a^2b \alpha_2 yX - \alpha_2^2}  y\,. 
    \end{split}
\end{equation}
The invariants of map \eqref{eq: degen lim YB 1} will reduce to invariants of map $\widetilde{\mathbb{R}}_{a,b}$ given in \eqref{eq: reduced degen lim YB 1}. Under the folding reduction, both invariants $\widetilde{I}_1$, $\widetilde{I}_2$  will become equal to $\widetilde{\mathbb{I}}=x X+y Y$, and similarly both invariants $\widetilde{I}_3$ and $\widetilde{I}_4$ will take the same form (up to a scaling factor), resulting in the following invariant
\begin{equation}
\widetilde{\mathbb{J}}= a^2b^2(\alpha_2\beta_1x X+\alpha_1\beta_2yY)+2\alpha_2\beta_2(b\alpha_2xY+a\beta_2yX)-4a^2b^2\alpha_2\beta_2x X y Y.
    \label{invariant reduced degen J}
\end{equation}
Both invariants $\widetilde{\mathbb{I}}$, $\widetilde{\mathbb{J}}$ can be obtained from the trace of the monodromy associated to the Lax matrix \eqref{Lax degen1 can} after applying the reduction \eqref{red 2}. The invariants $\widetilde{\mathbb{I}}$ and $\widetilde{\mathbb{J}}$ Poisson commute therefore the map \eqref{eq: reduced degen lim YB 1} is symplectic and Liouville integrable.

\begin{remark}
    The 4-dimensional map \eqref{eq: reduced degen lim YB 1} can also be obtained from the refactorisation of the Lax \eqref{4d Lax} by taking the limit $a_3\to 0$.
\end{remark}

\begin{remark}
The choice of parameters
$$
    a=b=1,\; \alpha_2=\beta_2=-1, \; \alpha_1 = 2A, \; \beta_1 = 2 B,
$$
results in the map
\begin{equation}
\label{eq: rescaled AY}
u =  y, \quad U = Y -\frac{A-B}{1+2yX} X, \quad v = x + \frac{A-B}{1+2yX}  y, \quad V =  X, 
\end{equation}
which after the rescaling of the variables $x,y,u,v$ by a factor of $\frac{1}{2}$ and the flip $x\leftrightarrow X$, $y\leftrightarrow Y$, $u\leftrightarrow U$ and $v \leftrightarrow V$ becomes the well-known AY map \cite{AdlerYamilov1994}.  
\end{remark}

\subsection{A double limit to a vectorial Adler-Yamilov map}

In this section we aim to demonstrate that a vectorial generalisation of the Adler-Yamilov map (vAY) together with its standard Lax matrix, can be obtained from the Lax matrices that we study after taking the double limit $a_2 \rightarrow 0$ and $a_3 \rightarrow 0$ in $K_a$. Since in this case the only nonzero parameter of $K_a$ is $a_1$, without loss of generality we can rescale it to one. Therefore, in this section we have that $K_a=e_{11}$. 

The double limit $(a_2,a_3)\to (0,0)$ of the Casimir $f_2$ in \eqref{Cas} exists and implies that $f_2=\alpha_2\equiv 0$. This means that the iterated single limits commute. Taking first the limit of $f_2$ when $a_2 \to 0$ we obtain that $a_3x_{22}=\alpha_2$, while if we take first the limit $a_3\to 0$ it follows that $a_2x_{33}=\alpha_2$. Hence, when $a_2,a_3 \to 0$, both $x_{22}$ and $x_{33}$ have to be constants. The case $x_{22}=x_{33}=0$ leads to a trivial refactorisation problem, therefore we consider the case $x_{22}=x_{33}=1$ which occurs only when $a_2=a_3=\epsilon\to 0$. If we restrict on the level set $\alpha_1=1$ of the Casimir $f_1$, then we obtain the constraint $x_{23}x_{32}=0$. We assume that $x_{23}=x_{32}=0$. Finally, under the above assumptions, the Casimir $f_0$ leads to the following constraint  
\begin{equation}
f_0:=x_{11}-x_{12}x_{21}-x_{13}x_{31}= a,
    \label{x11 a2a3 red}
\end{equation}
which we use to solve for $x_{11}$. 

Taking into account all the above, the reduced Lax matrix is of the form
\begin{equation}
\mathfrak{L} ({\bm x}, {\bm X},a,\lambda)=
\begin{pmatrix}
    a-\lambda+x_{1}X_{1}+x_{2}X_{2} & x_{1} & x_{2} \\
    X_{1} & 1 & 0 \\
    X_{2} & 0 & 1
\end{pmatrix},
    \label{eq: AY Lax}
\end{equation}
where we use the notation $x_{1j}=x_{j-1}$ and $x_{i1}=X_{i-1}$ for $i,j=2,3$. The refactorisation problem associated to the Lax matrix \eqref{eq: AY Lax} implies uniquely the non-quadritational YB map $\mathfrak{R}_{a,b}: ({\bm x}, {\bm X}, {\bm y}, {\bm Y}) \to ({\bm u},{\bm U},{\bm v},{\bm V})$
\begin{equation}
    \begin{split}
        {\bm u}={\bm y}-\frac{a-b}{1+\left<{\bm x},{\bm Y}\right>}{\bm x}, & \quad{\bm U}={\bm Y},\\
        {\bm V}={\bm X}+\frac{a-b}{1+\left<{{\bm x}},{\bm Y} \right>}{\bm Y}, & \quad {\bm v}={\bm x},
    \end{split}
    \label{eq AY map}
\end{equation}
where $\left<\cdot,\cdot\right>$ is the standard bilinear form in $\mathbb{F}^2$. From the spectrum of the monodromy matrix we obtain the usual invariants \eqref{I1I2} 
\begin{equation}
    \mathfrak{I}_1=x_1 X_1+y_1 Y_1\,, \quad \mathfrak{I}_2=x_2 X_2+y_2 Y_2\,,
\end{equation}
together with
\begin{equation}
    \mathfrak{I}_3 = b\left<{\bm x},{\bm X}\right>+a\left<{\bm y},{\bm Y}\right>+\left<{\bm x},{\bm Y}\right>+\left<{\bm X},{\bm y}\right> + \left<{\bm x},{\bm X}\right>\left<{\bm y},{\bm Y}\right>.
\end{equation}
Moreover, it can be verified by a direct calculation that the determinants 
$x_1y_2-x_2y_1$, $X_1Y_2-X_2Y_1$ are anti-invariants, and therefore their product
\begin{equation}
    \mathfrak{I}_4=(x_1y_2-x_2y_1)(X_1Y_2-X_2Y_1)
\end{equation}
is another invariant of the map \eqref{eq AY map}.

\begin{proposition} \label{prop: integrab vector AY}
The YB map $\mathfrak{R}_{a,b}$ in \eqref{eq AY map} is Liouville integrable and symplectic with respect to the canonical symplectic structure.
\end{proposition}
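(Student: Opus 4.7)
The assertion contains two claims: that $\mathfrak{R}_{a,b}$ preserves the canonical symplectic form
\[
\omega=\sum_{i=1,2}\bigl(dx_i\wedge dX_i+dy_i\wedge dY_i\bigr),
\]
and that the four invariants $\mathfrak{I}_1,\ldots,\mathfrak{I}_4$ are functionally independent and pairwise Poisson commute with respect to the associated canonical bracket. Together these deliver Liouville integrability on the eight-dimensional phase space.

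I would first establish symplecticity by direct pullback. The particularly simple shape $v_i=x_i$, $U_i=Y_i$, $u_i=y_i-\mu x_i$, $V_i=X_i+\mu Y_i$, with $\mu=(a-b)/\sigma$ and $\sigma=1+\langle\mathbf{x},\mathbf{Y}\rangle$, reduces the computation of $\mathfrak{R}_{a,b}^{\ast}\omega$ to the observation that the terms beyond $\omega$ assemble into a single wedge of the form $d\sigma\wedge d\mu$; this expression vanishes identically because $\mu$ depends only on $\sigma$, so $d\mu$ is a scalar multiple of $d\sigma$.

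For Poisson commutativity I would treat the six pairs in cases. The bracket $\{\mathfrak{I}_1,\mathfrak{I}_2\}$ vanishes trivially since the two functions involve disjoint sets of variables. The four brackets pairing $\mathfrak{I}_i$ ($i=1,2$) with $\mathfrak{I}_3$ or $\mathfrak{I}_4$ follow from a weight argument: the Hamiltonian flow of $\mathfrak{I}_i$ is the diagonal rescaling $(x_i,y_i,X_i,Y_i)\mapsto(e^tx_i,e^ty_i,e^{-t}X_i,e^{-t}Y_i)$ that fixes the remaining coordinates. Every summand of $\mathfrak{I}_3$ is manifestly invariant under this flow (each of $\langle\mathbf{x},\mathbf{X}\rangle$, $\langle\mathbf{y},\mathbf{Y}\rangle$, $\langle\mathbf{x},\mathbf{Y}\rangle$, $\langle\mathbf{X},\mathbf{y}\rangle$ has total weight zero on the index-$i$ slot), while $\mathfrak{I}_4=PQ$ with $P=x_1y_2-x_2y_1$ of weight $+1$ and $Q=X_1Y_2-X_2Y_1$ of weight $-1$ is also invariant. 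Since invariance under the flow of $\mathfrak{I}_i$ is equivalent to $\{\cdot,\mathfrak{I}_i\}=0$, these four brackets vanish.

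The main obstacle is the remaining bracket $\{\mathfrak{I}_3,\mathfrak{I}_4\}$. I would attack it via Leibniz, $\{\mathfrak{I}_3,PQ\}=P\{\mathfrak{I}_3,Q\}+Q\{\mathfrak{I}_3,P\}$, and work through the six summands of $\mathfrak{I}_3$ individually, using the elementary brackets $\{P,X_1\}=y_2$, $\{P,X_2\}=-y_1$, $\{P,Y_1\}=-x_2$, $\{P,Y_2\}=x_1$ and the analogous ones for $Q$. Careful bookkeeping is expected to yield the parallel identities
\[
\{\mathfrak{I}_3,P\}=-(a+b+\mathfrak{I}_1+\mathfrak{I}_2)\,P,\qquad\{\mathfrak{I}_3,Q\}=+(a+b+\mathfrak{I}_1+\mathfrak{I}_2)\,Q,
\]
whose two contributions in the Leibniz sum cancel exactly. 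Functional independence of the four invariants follows by checking that their Jacobian has rank four at a generic point, which is routine since, for instance, $\mathfrak{I}_4$ is the only one involving the antisymmetric combinations in the $y$ and $Y$ variables. Combined with symplecticity, this establishes Liouville integrability.
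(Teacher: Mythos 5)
Your proposal is correct, and it proves exactly the same two facts the paper checks — preservation of the canonical bracket/form and functional independence plus involutivity of $\mathfrak{I}_1,\dots,\mathfrak{I}_4$ — but where the paper simply asserts these as the outcome of direct calculation, you supply structural arguments that make the verification essentially computation-free. Your symplecticity argument is valid: writing $\mu=(a-b)/\sigma$ with $\sigma=1+\langle \mathbf{x},\mathbf{Y}\rangle$, the pullback of $\omega$ produces the extra term $-\,d\mu\wedge d\sigma$, which vanishes since $\mu$ is a function of $\sigma$ alone. The weight argument for $\{\mathfrak{I}_i,\mathfrak{I}_3\}=\{\mathfrak{I}_i,\mathfrak{I}_4\}=0$ ($i=1,2$) is sound, and the key identities you anticipate do hold: a short check gives $\{\mathfrak{I}_3,P\}=-(a+b+\mathfrak{I}_1+\mathfrak{I}_2)P$ and $\{\mathfrak{I}_3,Q\}=(a+b+\mathfrak{I}_1+\mathfrak{I}_2)Q$ (the cross terms $\langle \mathbf{x},\mathbf{Y}\rangle$ and $\langle \mathbf{X},\mathbf{y}\rangle$ contribute zero, the bilinear terms contribute $-(a+b)P$, respectively $(a+b)Q$, and the quartic term contributes $\mp(\mathfrak{I}_1+\mathfrak{I}_2)$ times $P$, resp.\ $Q$), so the Leibniz cancellation in $\{\mathfrak{I}_3,\mathfrak{I}_4\}$ goes through exactly as you predict. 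What your route buys is conceptual transparency — the commutation relations follow from scaling symmetry and a conformal-type identity rather than brute force; what it does not improve on is the functional-independence step, which you, like the paper, settle by a generic rank count of the Jacobian. To turn the plan into a finished proof you need only write out the rank-four check at one explicit point and replace the phrase ``expected to yield'' by the short computation above.
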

\begin{proof}
The Jacobian matrix of the invariants $\mathfrak{I}_1,\ldots, \mathfrak{I}_4$ has full rank and therefore the invariants are functionally independent. Moreover, the invariants commute with respect to the Poisson bracket 
$$
 \{ x_i, X_i \} = \delta_{ij}, \quad \{y_i, Y_i \} = \delta_{ij}
$$
 implied by the Sklyanin bracket \eqref{Skly}. The map \eqref{eq AY map} preserves the Poisson bracket and therefore is symplectic with respect to the corresponding canonical symplectic structure
$$
\omega=\sum_{i=1}^{2}dx_i\wedge dX_i+dy_i\wedge dY_i.
$$
\end{proof}
 
\begin{remark}
    The YB map $\mathfrak{R}_{a,b}$ in \eqref{eq AY map} together with its Lax matrix \eqref{eq: AY Lax} has an obvious generalisation to $n$-dimensional $\mathbb{F}$-vectors. The integrability properties of the map can be generalised to arbitrary dimensions in a similar way as the one described in Proposition \eqref{prop: integrab vector AY}. The $n$-dimensional map has been derived in \cite{KRM2013}.
\end{remark}

The reduction \eqref{red 2} to the invariant manifold $\mathcal{N}$ presented in Section \eqref{sec:reduction} can also be applied in this case, and reduces the map $\mathfrak{R}_{a,b}$ in \eqref{eq AY map} to the standard AY map \cite{AdlerYamilov1994}. Moreover, map \eqref{eq AY map} can be obtained from the YB map $\widetilde{R}_{p,q}$ in \eqref{eq: degen lim YB 1} after the following choice for the parameters
$$
a_1= b_1 = 1, \: a_2 = b_2 = 1, \:\alpha_2 = \beta_2 = -1, \: \alpha_1 = 2a, \: \beta_1 = 2 b\,,
$$
and the permutation in the variables
$$
{\bm x} \leftrightarrow {\bm X}, \, {\bm y} \leftrightarrow {\bm Y}, \, {\bm u} \leftrightarrow {\bm U}, \, {\bm v} \leftrightarrow {\bm V} \,.
$$
This implies that the vAY map \eqref{eq AY map} admits the Lax matrix
\begin{equation} \label{AY Lax 2}
\widetilde{L}({\bm X}, {\bm x}, a) =  \begin{pmatrix}
      x_1X_1 +a + \lambda & x_1X_2 & -x_1 \\
      x_2X_1 & x_2X_2 + a + \lambda & -x_2 \\
      -X_1 & -X_2 & 1
  \end{pmatrix}.
\end{equation}
This Lax matrix is equivalent to the Lax matrix \eqref{eq: AY Lax} as it can be directly obtained from \eqref{eq: AY Lax} by inversion and a similarity transformation by a permutation matrix.

\section{Conclusions}

In this paper we studied reductions and degenerations of a family of YB maps with $3 \times 3$ first-degree polynomial Lax matrices. We also introduced compatible Poisson structures associated with the Sklyanin bracket and invariant conditions, ensuring the Liouville integrability of all the presented maps.

Yang-Baxter maps serve as  fundamental building blocks for constructing higher-dimensional discrete integrable systems. As it was shown in \cite{Veselov2003, Veselov2007}, each YB map generates an hierarchy of commuting $n$-dimensional {\it  transfer maps} which share the same integrals. A different variant of transfer maps occurs by considering periodic initial value problems on lattices, in correspondence with staircase initial value problems of integrable lattice equations \cite{PNC1990, QCPN1991}. While the dynamics of an individual YB map sometimes may be trivial (e.g. involution), the dynamics of the associated transfer maps display highly non-trivial behavior.

All variations of transfer maps preserve the spectrum of their corresponding monodromy matrices, constituting of products of Lax matrices. In this setting, YB maps with Lax matrices compatible with the Sklyanin bracket, as the maps presented in this paper, hold a significant advantage as they generate Poisson transfer maps with commutative integrals derived from the spectrum of the corresponding monodromy matrices.

The family of the YB maps presented in this work includes all cases involving $3\times 3$ Lax matrices linear in the spectral parameter and with the degree one coefficient being a constant and diagonalisable matrix. The connection between the maps that we obtained can be summarised in the graph below. The arrows in the graph indicate a connection between the maps which can be a reduction, a limit, a choice of parameters, or a simple transformation in the dynamical variables that is a symmetry of the YB equation.

In future research, we aim to study families of YB maps and corresponding degenarations, with $3 \times 3$ Lax matrices associated with different Jordan forms of the highest-degree terms. This will complete the classification of $3 \times 3$ binomial Lax matrices under conjugation. Furthermore, we intend to include the $3$-dimensional consistent lattice equations associated with all these maps and investigate their integrability features. 

\begin{figure}[ht!]
\centering
\begin{small}
\begin{tikzpicture}[node distance=1.5cm and 2.5cm, auto]
\node (maps_title) at (0,0) {\textbf{maps}};
\node (dim_title) [right=of maps_title, xshift=4.5cm] {\textbf{dimensions}};
\node (R_ab) [below=of maps_title] {$\mathbf{R}_{ab}$};
\node (R_a'b') [below=of R_ab] {$\mathrm{R}_{\bar{a},\bar{b}}$};
\node (R_pq) [below=of R_a'b'] {$R_{p,q}$};
\node (R_pq_2) [below=of R_pq, yshift=-0.2cm] {$\mathbb{R}_{p,q}$}; 
\node (R_pq_tilde) [right=of R_pq, xshift=0.1cm] {$\widetilde{R}_{p,q}$};
\node (R_pq_tilde1) [right=of R_pq_tilde, xshift=0.1cm] {vAY};
\node (R_pq_tilde_2) [below=of R_pq_tilde, yshift=-0.1cm] {$\widetilde{\mathbb{R}}_{p,q}$};
\node (R_pq_tilde2) [right=of R_pq_tilde_2, xshift=0.25cm] {AY};
\node (dim18) [right=of R_ab, xshift=5.4cm] {18};
\node (dim12) [right=of R_a'b', xshift=5.4cm] {12};
\node (dim6) [right=of R_pq, xshift=5.5cm] {8};
\node (dim4) [right=of R_pq_2, xshift=5.5cm] {4};
\draw[->] (R_ab) -- (R_a'b');
\draw[->] (R_a'b') -- (R_pq);
\draw[->] (R_pq)   -- ++ (0, -1.8) node[midway, left] {} -- (R_pq_2); 
\draw[->] (R_pq_tilde1)  -- ++ (0, -1.9) node[midway, left] {} -- (R_pq_tilde2);
\draw[->] (R_pq) -- ++(2.8, 0) node[midway, above] {$a_3\to 0$} -- (R_pq_tilde);
\draw[->] (R_pq_2) -- ++(3.1, 0) node[midway, above] {$a_3\to 0$} -- (R_pq_tilde_2);
\draw[->] (R_pq_tilde) -- ++(3, 0) node[midway, above] {} -- (R_pq_tilde1);
\draw[->] (R_pq_tilde_2) -- ++(3, 0) node[midway, above] {} -- (R_pq_tilde2);
\draw[->] (R_pq_tilde) -- ++ (0, -1.8) node[midway, left] {} -- (R_pq_tilde_2);
\end{tikzpicture}
\end{small}
\end{figure}

\newpage


\begin{thebibliography}{10}


\bibitem{AdamKRPap2021} 
P. Adamopoulou, S. Konstantinou-Rizos, and G. Papamikos, 
Integrable extensions of the Adler map via Grassmann algebras 
{\it Theor. Math. Phys.} \textbf{207} 553--559, 2021.


 \bibitem{ABS2004}
 V.E.~Adler, A.I.~Bobenko, and Yu.B.~Suris, Geometry of the Yang-Baxter Maps: pencils of conics and quadrirational mappings, {\em Comm. Anal. Geom.}, \textbf{12} (5), 967--1008, 2004.

\bibitem{AdlerYamilov1994}
V.E. Adler and R.I. Yamilov, Explicit auto-transformations of integrable chains, {\em J. Phys. A: Math. Gen.}, \textbf{27} (2), 447--492, 1994.

 \bibitem{Baxter}
 R.J. Baxter, {\em Exactly solved models in Statistical Mechanics}, Academic Press, 1982.


\bibitem{BobenkoSuris2002}
A.I. Bobenko and Yu.B. Suris, Integrable noncommutative equations on quad-graphs. The consistency approach. {\em Lett. Math. Phys.}, \textbf{61}, 241--254, 2002.


\bibitem{Bukhshtaber1998}
V.M.~Bukhshtaber, The Yang-Baxter transformation, {\em Russ. Math. Surv.},  \textbf{53} (6), 1343, 1998. 

\bibitem{Doliwa2014}
A. Doliwa, Non-commutative rational Yang--Baxter maps, {\it Lett. Math. Phys.} {\bf 104}, 299--309, 2014.

\bibitem{Drinfeld1992}
V.G.~Drinfeld, On some unsolved problems in quantum group theory, In P.P. Kulish, editor, {\em Quantum Groups}, volume 1510 of {\em
  Lecture Notes in Mathematics}, Springer, Berlin, Heidelberg, 1992.

 \bibitem{GonVes} V.M.~Goncharenko and A.P.~Veselov, Yang-Baxter maps and matrix solitons, {\em New trends in integrability and partial solvability}, {\em NATO Science Series}, \textbf{132}, 191--197, 2004.

\bibitem{KasKoul2022}
P.~Kassotakis, and T.~Kouloukas, On non-abelian quadrirational Yang–Baxter maps,  {\it{ J. Phys. A: Math. Theor.}} {\bf 55} 175203, 2022.

\bibitem{KNPT} P. Kassotakis, M. Nieszporski, V. Papageorgiou and A. Tongas,  Integrable two component systems of difference equations, {\em Proc. R. Soc. A}, \textbf{476}, 20190668, 2020.

\bibitem{KRM2013} 
S.~Konstantinou-Rizos and A.V.~Mikhailov, Darboux transformations, finite reduction groups and related Yang-Baxter maps, {\it J. Phys. A: Math. Theor.}, \textbf{46}, 425201, 2013.

\bibitem{KRK2018}
S.~Konstantinou-Rizos and T.E.~Kouloukas, A noncommutative discrete potential KdV lift, {\em J. Math. Phys.}, \textbf{59} (6), 063506, 2018.


\bibitem{KP2009}
T.E.~Kouloukas and V.G.~Papageorgiou, Yang-Baxter maps with first-degree-polynomial 2 × 2 Lax matrices, {\em J. Phys. A: Math. Theor.}, \textbf{42}, 404012, 2009.


\bibitem{KP2011}
T.E.~Kouloukas and V.G.~Papageorgiou, Poisson Yang-Baxter maps with binomial Lax matrices, {\em J. Math. Phys.}, \textbf{52}, 073502, 2011.

\bibitem{MikPapWang2016}
A.V.~Mikhailov, G.~Papamikos, and J.P.~Wang, Darboux transformation for the vector sine-Gordon equation and integrable equations on a sphere. {\em Lett. Math.
Phys.}, \textbf{106}, 973–996, 2016.

\bibitem{PTSV2010}
V.G.~Papageorgiou, Yu.B.~Suris, A.G.~Tongas, and A.P.~Veselov, On quadrirational Yang–Baxter maps, {\em SIGMA}, \textbf{6}, 033, 2010.


\bibitem{PNC1990}
V.G.~Papageorgiou, F.W.~Nijhoff, and H.W.~Capel,
Integrable mappings and nonlinear integrable lattice equations, {\it Phys. Lett. A}, \textbf{147}, 106--114, 1990.

\bibitem{QCPN1991}
G.R.W.~Quispel, H.W.~Capel, V.G.~Papageorgiou, and F.W.~Nijhoff, Integrable mappings derived from soliton equations, {\it Physica A}, \textbf{173}, 243--266, 1991.

\bibitem{Sklyanin1982}
E.K.~Sklyanin, Some algebraic structures connected with the
Yang-Baxter equation {\it Funct. Anal. Appl.} {\bf 16} (4),  263--270, 1982.

\bibitem{Sklyanin1988}
E.K.~Sklyanin, Classical limits of the SU(2)-invariant solutions of the Yang-Baxter equation, {\em J. Math. Sci.}, \textbf{40}, 93--107, 1988.

\bibitem{SurisVeselov2003}
Yu.B.~Suris and A.P.~Veselov, Lax matrices for Yang-Baxter maps, {\it J. Nonlin. Math. Phys.}  {\bf 10}, 223--230, 2003


\bibitem{Veselov2003}
A.P.~Veselov, Yang-Baxter maps and integrable dynamics, {\em Phys. Lett. A}, \textbf{314} (3), 214--221, 2003.


\bibitem{Veselov2007}
A.P.~Veselov, Yang-Baxter maps: dynamical point of view, {\it Combinatorial Aspects of Integrable Systems (Kyoto, 2004) MSJ Mem.}, \textbf{17}, 145--67, 2007.

 \bibitem{Yang}
 C.N.~Yang, Some exact results for the many-body problem in one dimension with repulsive delta-function interaction, {\em Phys. Rev. Lett.}, \textbf{19}, 1312, 1967.

\end{thebibliography}
\end{document}